\newcommand{\Real}{\mathbb{R}}
\newcommand{\Comp}{\mathbb{C}}
\newcommand{\seq}{\subseteq}
\newcommand{\pspace}{\Real^3 \setminus \{0\}}
\newcommand{\Lightcone}{\mathcal{L}_+}
\newcommand{\Poincare}{Poincar\'{e} }
\newcommand{\SO}{\mathrm{SO}}
\newcommand{\so}{\mathfrak{so}}
\newcommand{\ISO}{\mathrm{ISO}}
\newcommand{\kvec}{\boldsymbol{k}}
\newcommand{\mbf}[1]{\boldsymbol{#1}}
\newcommand{\hatbm}[1]{\hat{\bm{#1}}}
\newcommand{\vhat}{\hatbm{v}}
\newcommand{\khat}{\hatbm{k}}
\newcommand{\etheta}{\uv{e}_\theta}
\newcommand{\ephi}{\uv{e}_\phi}
\newcommand{\eh}{\uv{e}_h}
\newcommand{\epm}{\uv{e}_\pm}
\newcommand{\ex}{\uv{e}_x}
\newcommand{\ey}{\uv{e}_y}
\newcommand{\uv}[1]{\mbf{#1}} 
\newcommand{\Phat}{\hatbm{P}} 
\newcommand{\proj}{\mathcal{P}}
\newtheorem{theorem}{Theorem}
\newtheorem{lemma}[theorem]{Lemma}
\title{\boldmath Spin-weighted spherical harmonics as massless angular momentum eigenstates and their role in obstructing spin-orbital decompositions}
\author[1]{E. Palmerduca\note{Corresponding author.}}
\author{and H. Qin}
\affiliation{Department of Astrophysical Sciences, Princeton University, Princeton, NJ 08544, USA}
\affiliation{Princeton Plasma Physics Laboratory, Princeton, NJ 08543, USA}
\emailAdd{ep11@princeton.edu}
\abstract{We show that for massless helicity $h$ particles, the angular momentum eigenstates are given in an appropriate coordinate system by the spin-weighted spherical harmonics ${_{-h}Y_{jm}}$ of spin-weight $-h$. In particular, these are simultaneous eigenstates of the Hamiltonian, helicity, $J^2$, and $J_z$. The appearance of the spin-weighted spherical harmonics as opposed to the ordinary spherical harmonics reflects the nontrivial topological structure of massless particles with nonzero helicity. The resultant angular momentum multiplet structure is quite different than that of massive particles, with at most one multiplet for each angular momentum $j$ and with $|h|$ acting as a lower bound on $j$. This illustrates the obstruction to a spin-orbital decomposition of the angular momentum for massless particles, as such a sparse multiplet structure is not consistent with any reasonable spin-orbital splitting.}
\begin{document}
\maketitle
\flushbottom

\section{Introduction}\label{sec:SWSH:Intro}
After the breakthrough discovery of topological insulators \cite{Kane2005,Hasan2010}, it has been realized that nontrivial topological structures can also exist in continuous media such as fluids  \cite{Souslov2017,Delplace2017,Tauber2019,Souslov2019,Perrot2019,Parker2021,Faure2023} and plasmas \cite{Yang2016,Gao2016,Parker2020a,Parker2020b,Parker2021,Fu2021,Fu2022,Qin2023, Qin2024plasma,Fu2024}. In fact, even the vacuum supports topologically nontrivial particle waves \cite{PalmerducaQin_PT,PalmerducaQin_GT, PalmerducaQin_helicity}. The topology in the latter refers to the global relationship between the internal (polarization) and external (momentum) degrees of freedom (DOFs) of a particle. For a massive particle, the internal spin DOFs are independent of the external momentum DOFs, reflecting the topological triviality of massive particles. In contrast, the internal polarization states of a massless particle with nonzero helicity are constrained by the external momentum DOFs, for example, the photon polarization must be transverse to the momentum. This ``twisting'' together of the internal and external DOFs of a massless particle reflects a nontrivial topological structure. More technically, for each momentum $k$, the possible polarization states form a vector space $V(k)$ which varies smoothly with $k$. Such a parameterized set of vector spaces form a topological structure known as a vector bundle, and these can be broadly classified as trivial or nontrivial, or more narrowly classified by their topological invariants such as Chern numbers \cite{Tu2017differential}. 

Recent work has leveraged this topological structure to discover new properties of massless particles. For example, we explicitly constructed a globally smooth basis of polarization vectors for photons and gravitons \cite{PalmerducaQin_PT,PalmerducaQin_GT}, despite the conventional assertion that the existence of such a global basis would violate the hairy ball theorem \cite{Tong2006,Woit2017}. We also proved that it is not possible to decompose the total angular momentum operator $\boldsymbol{J}$ for massless particles into legitimate spin (SAM) and orbital angular momentum (OAM) operators \cite{PalmerducaQin_SAMOAM, PalmerducaQin_connection}, generalizing the results of van Enk and Nienhuis \cite{VanEnk1994_EPL_1,VanEnk1994_JMO_2}. 

Recently, Dragon \cite{Dragon2024} used vector bundle techniques to construct simultaneous eigenstates of energy, helicity, $J^2$ and $J_z$ for massless particles. This gives a decomposition of the state space into finite-dimensional rotationally invariant subspaces and gives a countable orthonormal basis for monochromatic waves (as opposed to the uncountable basis of momentum eigenstates). The analogous states for massive particles, with helicity replaced by spin, are simply given as Clebsch--Gordan sums of tensor products of the spherical harmonics with definite spin states. The problem is much less trivial for massless particles where the internal (polarization) and external (momentum) DOFs are coupled, but the problem is tractable in Dragon's vector bundle framework. However, the form of these solutions have certain undesirable properties. One can describe vector bundles with a number of different formalisms. In that used by Dragon, the fibers are never explicitly defined; instead the vector bundle structure is probed solely through transition functions on overlapping domains of the north and south stereographic wavefunctions. While this is valid mathematically, it leads to eigenstates which are represented as pairs of functions in stereographic coordinates which are not easily interpreted. Furthermore, Dragon only gives the explicit formula for a single eigenstate within each angular momentum multiplet; to obtain the others one would need to repeatedly apply the angular momentum lowering operator, which itself has a somewhat complicated form (cf. \cite{Dragon2024}, Eq. (69)). 

In contrast to Dragon's implicit framework, we recently showed that massless particles of arbitrary helicity can be explicitly represented as iterated tensor products of the right ($R$) and left ($L$) photon bundles, with these photon bundles taking the form of line subbundles of the complexified tangent bundle of the sphere \cite{PalmerducaQin_helicity}. In this article, we use this alternative framework to give a simpler and more explicit derivation of the simultaneous eigenstates of energy, helicity, $J^2$, and $J_z$ for massless particles. We show that in appropriate coordinates, these eigenstates actually take the form spin-weighted spherical harmonics (SWSHs), generalizing the ordinary spherical harmonics that appear in the study of massive particles. SWSHs have shown up in many physical applications, including the study of asymptotically flat spacetimes \cite{Newman1966}, the dynamics of charged particles in the presence of a magnetic monopole \cite{Wu1975,Wu1976,Dray1985,Fakhri2007}, the study of complex spacetimes and twistor theory \cite{Curtis1978}, in geophysical \cite{Michel2020} and computer graphics applications \cite{Yi2024}, and in studying anisotropies in the cosmic microwave background \cite{Zaldarriaga1997,ng1999,Wiaux2006}. They have been used in treatments of electromagnetic and gravitational waves in position space where the theory is more complicated and the topological nature of the SWSHs is not as apparent \cite{Thorne1980,Torres2007,Mandrilli2020}. Here, we show that when working in momentum space, the SWSHs give a simple description of the angular momentum eigenfunctions for massless particles of arbitrary helicity. 

Another important implication of the topological nontriviality of massless particles relates to the question of whether or not the total angular momentum operator can be split into SAM and OAM operators. For massive particles such a splitting is trivial in the nonrelativistic limit and given by the Newton-Wigner splitting in the relativistic case \cite{Terno2003}. In contrast, there has been a long controversy surrounding this question for massless particles \cite{Akhiezer1965,Jaffe1990,VanEnk1994_EPL_1,VanEnk1994_JMO_2,Chen2008,Wakamatsu2010,Bialynicki-Birula2011,Leader2013,Leader2014,Leader2016,Leader2019,Yang2022,Das2024}, with many different proposed massless spin-orbital decompositions; however, all fail to satisfy the defining property of angular momentum operators. In particular, none of the proposed operators generate 3D rotational symmetries (or equivalently, satisfy the angular momentum commutation relations) while remaining gauge invariant \cite{VanEnk1994_EPL_1,VanEnk1994_JMO_2,Leader2014,Leader2019,Yang2022}. We recently proved that the topological nontriviality of massless particles actually makes such a spin-orbital decomposition impossible \cite{PalmerducaQin_SAMOAM}, essentially showing that the issues first recognized by van Enk and Nienhuis \cite{VanEnk1994_EPL_1} cannot be avoided by redefining the SAM and OAM operators. The physics of this singularity can be traced back to the vanishing of the rest frame \cite{PalmerducaQin_PT}, the degeneracy of transverse boosts and rotations that occurs in the massless limit \cite{PalmerducaQin_connection}, or to the geometric jump in the little group from $\SO(3)$ to $\ISO(2)$ in the massless limit \cite{PalmerducaQin_PT,PalmerducaQin_GT,PalmerducaQin_SAMOAM}. Nevertheless, the proofs of the nonexistence of massless SAM and OAM are somewhat abstract \cite{PalmerducaQin_SAMOAM}. The second goal of this article is to show that the description of the rotationally invariant subspaces in terms of SWSHs gives a concrete illustration of this no-go result. In particular, we show that the angular momentum multiplet structure of the SWSHs is too sparse to be consistent with any reasonable spin-orbital decomposition. Furthermore, we also find that the helicity acts as lower bound on the total angular momentum of a massless particles. In contrast, massive particles possess states of total angular momentum $0$, regardless of the spin, and such states are unavoidable if the the total OAM takes on all nonnegative integer values. 

We will ignore quantum numbers arising from non-spacetime symmetries, so our results will not strictly apply to particles with, for example, color charge. We also restrict our attention to massless bosons since this leads to simplifications and there are no known elementary massless fermions. We also note that an alternative form of the simultaneous eigenfunctions of energy, helicity, and angular momentum in terms of Wigner $D$-functions and $3j$ symbols can be found in Ref. \cite{LandauLifshitzQED}. However, the derivation is more complicated and the topological nature of these eigenstates is not clear in this form of the solution. Furthermore, the relationship between these eigenstates and the problem of SAM-OAM decomposition has not been previously considered.

This article is organized as follows. In Sec. \ref{sec:SWSH:Formalism} we give an overview of the vector bundle representations of massless particles. We then show how massless particles states are described by spin-weighted functions in Sec. \ref{sec:SWSH:Spin_weighted_functions} . In Sec. \ref{sec:SWSH:bundle_harmonics} we construct the total angular momentum operators for massless particles and show that they decompose the state space into angular momentum multiplets corresponding to the SWSHs. In Sec. \ref{sec:SWSH:SAM_OAM} we show this decomposition illustrates the topological obstruction to a spin-orbital splitting of the massless angular momentum operator.

\section{Vector bundle representations of particles}\label{sec:SWSH:Formalism}
We will use the vector bundle formalism for massless particles developed in Refs. \cite{PalmerducaQin_PT,PalmerducaQin_GT,PalmerducaQin_helicity, PalmerducaQin_SAMOAM}, which we now review. The momentum four vector $k^\mu$ of a massless particle resides on the forward light cone $\Lightcone$, which, in the $(-,+,+,+)$ signature, is given by
\begin{align}
    \Lightcone &= \{k^\mu= (\omega, \kvec): k^\mu k_\mu = 0, \omega > 0 \} \\
    &\cong \pspace.
\end{align}
As massless particles cannot have zero momentum, there is a hole at the origin, making the momentum space noncontractible. Note that this light cone can be parameterized solely by its spatial part $\kvec$, giving a diffeomorphism with the punctured Euclidean space $\pspace$. In the prototypical case of photons, any definite momentum state can be represented by $(\kvec,\mbf{E})$ or $(k^\mu,\mbf{E})$ where $\mbf{E}$ is the electric field polarization which must be transverse to the momentum: $\mbf{E} \cdot \kvec = 0$. These are representations in Fourier space, so $\mbf{E}$ is generally complex. At each $\kvec \in \Lightcone$, the collection of all such transverse $\mbf{E}$ forms a vector space $V(\kvec)$. Such a parameterized collection of vector spaces forms a vector bundle $\gamma$, which we call the total photon bundle, consisting of all $(\kvec,\mbf{E})$ pairs. $\Lightcone$ is referred to as the base manifold, and there is a projection $\pi:\gamma \rightarrow \Lightcone$ such that $\pi(\kvec,\mbf{E}) = \kvec$. The vector space $V(\kvec)$ is called the fiber at $\kvec$. The rank of the vector bundle is the dimension of any fiber, describing the number of internal DOFs; the rank of $\gamma$ is $2$. 

The photon bundle forms a representation of the (proper orthochronous) \Poincare group $\ISO^+(3,1)$: for a Lorentz transformation $\Lambda \in \SO^+(3,1)$ we have 
\begin{equation}\label{eq:SWSH:bundle_action}
    \Lambda (k^\mu,\mbf{E}) = (\Lambda k^\mu,\mbf{E}')
\end{equation}
where $k^\mu$ transforms like a four-vector and $\mbf{E}'$ transforms under rotations like a three-vector and under boosts according to the standard electromagnetic boost transformations \cite{Jackson1999,PalmerducaQin_PT}. Under a spacetime translation $T_a$ by $a^\mu \in \Real^4$, we have
\begin{equation}
    T_a (k^\mu,\mbf{E}) = e^{-ik^\mu a_\mu}(k^\mu,\mbf{E}).
\end{equation}
Elementary particles correspond to unitary irreducible representations (UIR) of the \Poincare group \cite{Weinberg1995,Maggiore2005}, so we decompose $\gamma$ into its Lorentz irreducible subbundles. These are the line (rank 1) bundles $\gamma_+$ and $\gamma_-$ consisting the right (R) and left (L) circularly polarized photons, respectively. That is, $\gamma_\pm(\kvec)$ consists of vectors of the form 
\begin{equation}\label{eq:SWSH:explicit_RL_form}
    c(\mbf{E}_1 \pm i\mbf{E}_2)
\end{equation}
 where $c \in \Comp$ and $(\mbf{E}_1,\mbf{E}_2,\khat)$ forms a right-handed orthonormal basis of $\Real^3$.

Via Wigner's classification \cite{Wigner1939,Weinberg1995,Asorey1985,PalmerducaQin_PT}, massless bosons are characterized by an integer helicity $h$. It was recently shown that massless particle bundles of arbitrary helicity can be explicitly constructed by taking repeated tensor products of the $R$ and $L$ photon bundles:
\begin{equation}\label{eq:SWSH:gamma_h_tensor}
    \gamma_h = \underbrace{\gamma_\pm \otimes \cdots \otimes \gamma_\pm}_{|h| \text{ times}}
\end{equation}
where the $\pm$ corresponds to the sign of $h$; note that this is a distinctly massless result that relies on the fact that massless particles have a single internal DOF. We can write a vector in $\gamma_h$ as $v = v_1 \otimes \cdots \otimes v_h$ with $v_n \in \gamma_\pm$, and the \Poincare action on $\gamma_h$ is given by
\begin{align}\label{eq:SWSH:bundle_action_tensor}
    \Lambda(\kvec,v_1 \otimes \cdots \otimes v_n) &= (\Lambda\kvec,\Lambda v_1 \otimes \cdots \otimes \Lambda v_n), \\
    T_a(\kvec,v_1 \otimes \cdots \otimes v_n) &= e^{-ik^\mu a_\mu}(\kvec,v_1 \otimes \cdots \otimes v_n).
\end{align}
Eqs. (\ref{eq:SWSH:bundle_action}) and (\ref{eq:SWSH:bundle_action_tensor}) describe the Lorentz transformation on single-particle states with definite momentum $\kvec$. The Hilbert space of particle states is $L^2(\gamma_h)$, the collection of square integrable sections of $\gamma_h$ with respect to the Lorentz invariant measure. An element $\mbf{\alpha} \in L^2(\gamma_h)$ can be considered as a wave function, where $\mbf{\alpha}(\kvec) \in \gamma_h(\kvec)$. The inner product is given by
\begin{align}
    \langle\mbf{\alpha}, \mbf{\beta}\rangle &= \int_{\Lightcone}\frac{d^3\kvec}{|\kvec|}\mbf{\alpha}^*(\kvec)\cdot \mbf{\beta}(\kvec) \\
    &= \prod_{n=1}^{|h|} \int_{\Lightcone}\frac{d^3\kvec}{|\kvec|}\mbf{\alpha}^*_n(\kvec)\cdot \mbf{\beta}_n(\kvec).
\end{align}
The \Poincare action on $\gamma_h$ induces a representation of the \Poincare group on $L^2(\gamma_h)$ given by 
\begin{equation}
    (\Lambda\mbf{\alpha})(\kvec) = \Lambda\Big(\mbf{\alpha}(\Lambda^{-1}\kvec)\Big)
\end{equation}
for $\Lambda \in \ISO^+(3,1)$. We are particularly concerned with the subgroup $\SO(3) \seq \ISO^+(3,1)$ corresponding to rotations of the spacelike coordinates. The infinitesimal generators of this $\SO(3)$ action are the angular momentum operators $\boldsymbol{J}$. That is, for some direction $\vhat \in \Real^3$, the scalar operator $\vhat \cdot \boldsymbol{J}$ is defined by
\begin{equation}\label{eq:SWSH:J_def}
   [(\vhat \cdot \mbf{J})\mbf{\alpha}](\kvec) = i \frac{d}{d\psi}\Big |_{\psi=0}\mbf{R}_{\vhat}^\psi \mbf{\alpha}\big(\mbf{R}_{\vhat}^{-\psi} \kvec\big)
\end{equation}
where $\mbf{R}_{\vhat}^\psi$ is a rotation by $\psi$ about $\vhat$. Similarly, the momentum operator $\mbf{P}$ and Hamiltonian $H$ are the generators of spacetime translations and act simply by
\begin{align}
   \mbf{P}\mbf{\alpha}(\kvec) &= \kvec \mbf{\alpha}(\kvec), \\
   H\mbf{\alpha}(\kvec) &= |\kvec|\mbf{\alpha}(\kvec).
\end{align}

\section{Massless particle states as spin-weighted functions}\label{sec:SWSH:Spin_weighted_functions}
In this section we will review the formalism of spin-weighted functions and show that massless particle states of helicity $h$ can be described using these functions if appropriate coordinates are chosen. One of the standard definitions of spin-weighted functions is as follows \cite{Goldberg1967,Dray1985,Boyle2016}. Denote the points on the 2-sphere $S^2$ by the normal unit vector $\khat$. Suppose that $(\mbf{a},\mbf{b},\khat)$ is a right-handed orthonormal basis at each $\khat$. Note that by the hairy ball theorem $\mbf{a}$ and $\mbf{b}$ will have singularities at at least one $\khat$; we will consider $\mbf{a},\mbf{b}$ such that they vary smoothly with $\khat$ except at a finite number of points. We can encode the choice of $\mbf{a}$ and $\mbf{b}$ in the complex vector
\begin{align}\label{eq:SWSH:m_plus_def}
    \mbf{m}_+ = \frac{\mbf{a} + i\mbf{b}}{\sqrt2}.
\end{align}
A different choice of orthonormal frame $(\mbf{a}',\mbf{b}',\khat)$ corresponds to a multiplication of $\mbf{m}_+$:
\begin{gather}
    \mbf{m}_+' = e^{i\xi(\khat)}\mbf{m}_+ \\
    \mbf{a}' = \cos(\xi)\mbf{a} - \sin(\xi)\mbf{b} \\
    \mbf{b}' = \sin(\xi)\mbf{a} + \cos(\xi)\mbf{b}
\end{gather}
for some $\xi:S^2 \rightarrow \Real$. We consider $\Comp$-valued functions $f(\mbf{m}_+,\khat)$ on the sphere that depend not only on $\khat \in S^2$, but on the choice of right-handed orthonormal basis defined by $\mbf{m}_+$. Such a function is said to have spin-weight $s$ and is written as $_sf$ if
\begin{equation}
    _sf(e^{i\xi}\mbf{m}_+,\khat) = e^{is\xi} {_sf}(\mbf{m}_+,\khat)
\end{equation}
for any $\xi$. We note that it is straightforward to see that spin-weighted functions can equivalently be defined in terms of 
\begin{equation}
    \mbf{m}_- = \frac{\mbf{a} - i\mbf{b}}{\sqrt{2}}.
\end{equation}
The transformation $\mbf{m}_+ \rightarrow e^{i\xi}\mbf{m}_+\doteq\mbf{m}'_+$ corresponds to $\mbf{m}_-\rightarrow e^{-i\xi}\mbf{m}_-\doteq\mbf{m}'_-$, that is, these both describe $(\mbf{a},\mbf{b},\khat) \rightarrow(\mbf{a}',\mbf{b}',\khat)$ . Then a function $_sf$ which depends on $\khat$ and a choice of orthonormal basis has spin-weight $s$ if
\begin{equation}
    _s f(e^{-i\xi}\mbf{m}_-,\khat) = e^{is\xi}{_sf}(\mbf{m}_-,\khat).
\end{equation}
We will now show how the state space $L^2(\gamma_h)$ of a massless helicity $h$ particles corresponds to the space of spin-weight $-h$ functions. $\mbf{m}_+(\khat) \in \gamma_+(\kvec)$ by (\ref{eq:SWSH:explicit_RL_form}) and (\ref{eq:SWSH:m_plus_def}), and since $\gamma_+$ is a line bundle, $\mbf{m}_+$ forms a basis of the fiber $\gamma_+(\khat)$. Therefore, we can express any section  $ \mbf{\alpha} \in L^2(\gamma_+)$ by 
\begin{align}
    \mbf{\alpha}(\kvec) &= a(\kvec)\mbf{m}_+(\khat) = \big(e^{-i\xi}a(\kvec)\big)\big(e^{i\xi}\mbf{m}_+(\khat)\big) \\
    &= \big(e^{-i\xi}a(\kvec)\big)\mbf{m}_+'(\khat)
\end{align}
We can thus define a basis dependent function $_{-1}a(\mbf{m}_+,\kvec)$ such that
\begin{equation}
    \mbf{\alpha}(\kvec) = {_{-1}a}(\mbf{m}_+,\kvec)\mbf{m}_+
\end{equation}
for any choice of $\mbf{m}_+$. We can express $\kvec = |\kvec|\khat \doteq \omega \khat$, and write $_{-1}a$ as ${_{-1}a}(\mbf{m}_+,\khat,\omega)$. Then for any fixed $\omega_0$, ${_{-1}\alpha}(\mbf{m}_+,\khat,\omega_0)$ is a function of spin-weight $-1$. Therefore, we see that $R$ photon states $L^2(\gamma_+)$ are described by spin-weight $-1$ functions. The spin-weighting encodes the local basis dependence of any description of $\mbf{\alpha}$. By an analogous argument with $\mbf{m}_+$ replaced by $\mbf{m}_-$, we find that the $L$-photon states, $L^2(\gamma_-)$, are described by spin-weight $1$ functions. We can extend these results to bundles of higher helicity using Eq. (\ref{eq:SWSH:gamma_h_tensor}). A section $\mbf{\beta} \in L^2(\gamma_h)$ can be written as
\begin{align}
    \mbf{\beta}(\mbf{k}) &= b(\kvec) \bigotimes_{n=1}^{|h|}\mbf{m}_{\pm} \\
    &= e^{\mp i|h|\xi}b(\kvec) \bigotimes_{n=1}^{|h|}e^{\pm i\xi}\mbf{m}_{\pm} \\
    &= e^{-ih\xi}b(\kvec) \bigotimes_{n=1}^{|h|}\mbf{m}'_{\pm}
\end{align}
where the $\pm$ sign is corresponds to the sign of $h$. Thus, massless particle states of helicity $h$ are described by function of spin-weight $-h$, and we see that this is essentially a different (and arguably bulkier) way to describe the fact that $\mbf{\beta}$ is a section of the vector bundle $\gamma_h$. The relationship between massless particle states and spin-weighted functions helps explain why SWSHs appear in the study of the angular momentum eigenstates which we examine in the next section. 

\section{Expressing angular momentum eigenstates as spin-weighted spherical harmonics}\label{sec:SWSH:bundle_harmonics}
In this section we will show that the angular momentum eigenstates of massless particles are described by the SWSHs, provided we choose our coordinates appropriately. We can peel off the radial behavior in $\kvec$ space since it is unaffected by the angular momentum $\mbf{J}$. Denote by $\gamma_{h,S^2}$ the bundle $\gamma_h$ with base manifold restricted to the unit sphere $S^2$ in $\mbf{k}$ space. Then
\begin{equation}
    L^2(\gamma_h) = L^2(\Real^+)\otimes L^2(\gamma_{h,S^2})
\end{equation}
where $\Real^+$ denotes the positive real numbers and $L^2(\gamma_{h,S^2})$ represents the space of monochromatic waves \footnote{More explicitly, we can write $L^2(\Real^+,|\kvec|^2d|\kvec|)$ to emphasize that the radial $L^2$ space has measure $|\kvec|^2d|\kvec|$ (\cite{Reed1975}, p. 160).}. Note that this decomposition can be considered as induced by the observables $H$ and $\mbf{J}$; $H$ acts trivially on $L^2(\gamma_{h,S^2})$ while $\mbf{J}$ acts trivially on $L^2(\Real^+)$. Our goal then is to find the simultaneous eigenstates of $J^2$ and $J_z$ in $L^2(\gamma_{h,S^2})$, which is equivalent to decomposing $L^2(\gamma_{h,S^2})$ into UIRs of $\SO(3)$. 

\subsection{Decomposing the angular momentum}
It will be useful to decompose $\mbf{J}$ as
\begin{subequations}\label{eq:SWSH:par_perp_decomp}
\begin{align}
    \mbf{J} &= \mbf{J}_\parallel + \mbf{J}_\perp \\
    \mbf{J}_\parallel &\doteq (\Phat \cdot \mbf{J})\Phat \\
    \mbf{J}_\perp &\doteq \mbf{J} - (\Phat \cdot \mbf{J})\Phat.
\end{align}
\end{subequations}
Given that $\mbf{J}_\parallel$ is related to the helicity operator $\chi \doteq \Phat \cdot \mbf{J}$, it is tempting to call $\mbf{J}_\parallel$ and $\mbf{J}_\perp$ SAM and OAM operators. We emphasize that while these are well-defined vector operators, meaning that \cite{Hall2013}
\begin{subequations}
\begin{align}
    [J_{\parallel,a},J_b] &= i\epsilon_{abc}J_{\parallel,c} \\
    [J_{\perp,a},J_b] &= i\epsilon_{abc}J_{\perp,c},
\end{align}
\end{subequations}
they do not satisfy the angular momentum commutation relations,
\begin{subequations}
\begin{align}
    [J_{\parallel,a},J_{\parallel,b}] &\neq i\epsilon_{abc}J_{\parallel,c} \\
    [J_{\perp,a},J_{\perp,b}] &\neq i\epsilon_{abc}J_{\perp,c},
\end{align}
\end{subequations}
and are thus not true angular momentum operators. This was first discovered by van Enk and Neinhius \cite{VanEnk1994_EPL_1,VanEnk1994_JMO_2}. Nevertheless, it was recently shown that this is the unique massless splitting induced by \Poincare symmetry and is essentially the closest thing to a spin-orbital decomposition that exists for massless particles \cite{PalmerducaQin_connection}. It is noteworthy that $\mbf{J}_\parallel$ is a point operator \cite{Tu2017differential, PalmerducaQin_connection}, meaning that $(\mbf{J}_\parallel \mbf{\alpha})(\kvec_0)$ depends only on $\mbf{\alpha}(\kvec_0)$ and not on the value of $\mbf{\alpha}$ at any other $\kvec$. In physical terms, this means that $\mbf{J}_\parallel$ only generates transformations of the internal DOFs. In contrast, $\mbf{J}_\perp$ is a local differential operator, with $(\mbf{J}_\perp \mbf{\alpha})(\kvec_0)$ depending on the value of $\mbf{\alpha}(\kvec)$ in a neighborhood of $\kvec_0$.

It is of both theoretical and practical interest to relate $\mbf{J}_\parallel$ and $\mbf{J}_\perp$ to the more familiar SAM and OAM $\mbf{S}$ and $\mbf{L}$ of massive particles. The fibers of $\gamma_{\pm}(\kvec)$ are embedded in the Hermitian vector space $\Comp^3$ via Eq. (\ref{eq:SWSH:explicit_RL_form}); let $\proj:\Lightcone \times \Comp^3 \rightarrow \gamma_\pm$ be the projection onto the photon states. Then for a section $\mbf{\alpha} \in L^2(\gamma_{\pm})$,
\begin{align}
    [(\vhat \cdot \mbf{J})\mbf{\alpha}](\kvec) &= i \proj\frac{d}{d\psi}\Big |_{\psi=0}\mbf{R}_{\vhat}^\psi \mbf{\alpha}\big(\mbf{R}_{\vhat}^{-\psi} \kvec\big) \\
    &= i \proj \Big\{ \frac{d}{d\psi}\Big |_{\psi=0}\mbf{R}_{\vhat}^\psi \mbf{\alpha}(\kvec) +  \frac{d}{d\psi}\Big |_{\psi=0} \mbf{\alpha}\big(\mbf{R}_{\vhat}^{-\psi} \kvec\big) \Big\} \\
    &= \vhat \cdot (\proj\circ \boldsymbol{S}) \mbf{\alpha}(\kvec) + \vhat \cdot (\proj\circ \boldsymbol{L})\mbf{\alpha}(\kvec). \label{eq:SWSH:S_L_decomp}
\end{align}
Here,
\begin{align}
    (S_a)_{bc} = -i\epsilon_{abc}
\end{align}
and in spherical coordinates \cite{Zettili2009}
\begin{equation}
    \mbf{L} = -i\ephi \partial_\theta - \etheta \frac{1}{\sin\theta} \partial_\phi.
\end{equation}
These extend to operators on the tensor product bundles $\gamma_h$ via the product rule, that is,
\begin{equation}\label{eq:SWSH:product_rule_extension}
    (\proj \circ \mbf{S})(\mbf{\alpha}_1 \otimes \cdots \otimes \mbf{\alpha}_{|h|}) \doteq \sum_{m=1}^{|h|} \mbf{\alpha}_1 \otimes \cdots \otimes(\proj \circ \mbf{S})\mbf{\alpha}_m \otimes \cdots \otimes \mbf{\alpha}_{|h|}
\end{equation}
and similarly for $\proj \circ \mbf{L}$.
The following lemma states that the projected operators $\proj \circ \mbf{S}$ and $\proj \circ \mbf{L}$ are in fact the same as $\mbf{J}_\parallel$ and $\mbf{J}_\perp$.
\begin{lemma}
    $\mbf{J}_\parallel = \proj \circ \mbf{S}$ and $\mbf{J}_\perp = \proj \circ \mbf{L}$. In particular,
    \begin{equation}\label{eq:SWSH:prime_decomp}
        \mbf{J} = (\Phat \cdot \mbf{J})\Phat + \proj \circ \mbf{L}
    \end{equation}
\end{lemma}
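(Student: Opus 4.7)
The plan is to leverage the decomposition in Eq.~(\ref{eq:SWSH:S_L_decomp}), which already establishes $\mbf{J} = \proj\circ\mbf{S} + \proj\circ\mbf{L}$ for sections of the photon bundles. Combined with the trivial splitting $\mbf{J} = \mbf{J}_\parallel + \mbf{J}_\perp$ from (\ref{eq:SWSH:par_perp_decomp}), the whole lemma reduces to a single identification, $\proj\circ\mbf{S} = (\Phat\cdot\mbf{J})\Phat$; the identification of $\proj\circ\mbf{L}$ with $\mbf{J}_\perp$ then follows by subtraction, and the final displayed equation is just a restatement.

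First, I would verify this identification on $\gamma_\pm$. Reading off $(\vhat\cdot\mbf{S})\mbf{E} = i\vhat\times\mbf{E}$ from $(S_a)_{bc}=-i\epsilon_{abc}$, I would split $\vhat$ into components parallel and perpendicular to $\khat$. The transverse component crossed with any $\mbf{E}\in\gamma_\pm(\kvec)$ yields a vector along $\khat$ (since both factors lie in the plane orthogonal to $\khat$), so it is annihilated by $\proj$. The longitudinal component reduces to the eigenvalue relation $\khat\times\mbf{m}_\pm = \mp i\mbf{m}_\pm$, which is a one-line check using $\khat\times\mbf{a}=\mbf{b}$ and $\khat\times\mbf{b}=-\mbf{a}$ in the right-handed frame $(\mbf{a},\mbf{b},\khat)$. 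Combining, $\proj\circ(\vhat\cdot\mbf{S})\mbf{m}_\pm = \pm(\vhat\cdot\khat)\mbf{m}_\pm$, which is precisely $\vhat\cdot[(\Phat\cdot\mbf{J})\Phat]\mbf{m}_\pm$ once one uses $\Phat\cdot\mbf{J} = \pm 1$ on $\gamma_\pm$.

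To pass to arbitrary $\gamma_h$, I would exploit the tensor decomposition (\ref{eq:SWSH:gamma_h_tensor}) and the Leibniz extension (\ref{eq:SWSH:product_rule_extension}). Applied to a decomposable vector $\mbf{m}_\pm^{\otimes|h|}$, the $|h|$ derivation terms each reproduce the single-factor computation, summing to $h(\vhat\cdot\khat)\mbf{m}_\pm^{\otimes|h|}$; on the same vector $(\Phat\cdot\mbf{J})\Phat$ returns $h\khat$ since the helicity $\chi$ acts as the scalar $h$ on $L^2(\gamma_h)$, giving agreement on a spanning set. The only step requiring care is confirming that Eq.~(\ref{eq:SWSH:S_L_decomp}) itself survives the passage from $\gamma_\pm$ to $\gamma_h$, which it does because $\mbf{J}$, $\proj\circ\mbf{S}$, and $\proj\circ\mbf{L}$ are all extended to $\gamma_h$ as derivations over $\otimes$—the first via the componentwise Lorentz action (\ref{eq:SWSH:bundle_action_tensor}), the other two by construction. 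There is no real obstacle; the content of the lemma is the elementary $\Comp^3$ calculation above combined with the bookkeeping of the tensor product structure.
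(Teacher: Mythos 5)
Your proposal is correct and follows essentially the same route as the paper: decompose $\vhat$ into components parallel and perpendicular to $\khat$, kill the perpendicular piece because the cross product with a transverse polarization lands along $\khat$ and is annihilated by $\proj$, identify the parallel piece with $(\vhat\cdot\khat)(\khat\cdot\mbf{J})$, obtain $\mbf{J}_\perp=\proj\circ\mbf{L}$ by subtraction, and extend to $\gamma_h$ via the Leibniz rule. The only cosmetic difference is that you verify the longitudinal identification through the explicit eigenvalue computation $\khat\times\mbf{m}_\pm=\mp i\,\mbf{m}_\pm$ together with the helicity eigenvalue $\pm1$ on $\gamma_\pm$, whereas the paper deduces $\khat\cdot\mbf{S}=\khat\cdot\mbf{J}$ directly from the fact that rotations about $\khat$ fix $\kvec$.
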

\begin{proof}
    $\proj \circ \mbf{S}$ restricts to an operator on each fiber of $\gamma_h$. Suppose first that $h = \pm 1$ so that $\mbf{\alpha}(\kvec)$ is embedded in $\Comp^3$.  Let $\vhat \in \mathbb{R}^3$ and consider the fiber at $\khat$, with $\khat \cdot \vhat = \cos(\theta)$. Decompose $\vhat = \cos(\theta)\khat + \vhat_\perp$ into parallel and perpendicular components with respect to $\khat$. Then
    \begin{align}
    [\vhat \cdot (\proj \circ \mbf{S})]\mbf{\alpha}(\kvec)  &= \Big( \cos(\theta)\khat\cdot (\proj \circ \mbf{S}) + \vhat_\perp \cdot (\proj \circ \mbf{S}) \Big)\mbf{\alpha}(\kvec).
    \end{align}
    Using the identity
    \begin{equation}
        \frac{d}{d\psi}\Big |_{\psi = 0}R_{\vhat_1}^\psi \mbf{v}_2 = \vhat_1 \times \mbf{v}_2
    \end{equation}
    we find that
    \begin{align}\label{eq:SWSH:perp_part}
        [\vhat_\perp \cdot (\proj \circ \mbf{S})] \mbf{\alpha}(\khat) &= i\proj \frac{d}{d\psi}\Big |_{\psi = 0}R_{\vhat_\perp}^\psi \mbf{\alpha}(\kvec) \\&= i\proj (\mbf{v}_\perp \times\mbf{\alpha}) \\
        &= 0.
    \end{align} 
    The last equality follows since $\vhat_{\perp} \times\alpha(\mbf{k})$ is proportional to $\khat$ and is thus annihilated by $\proj$. $\proj(\khat \cdot \mbf{S}) = \khat \cdot \mbf{S}$ since $R_{\khat}^\psi \mbf{\alpha}(\kvec) \cdot \khat = 0$, so
    \begin{align}
        [\vhat \cdot (\proj \circ \mbf{S})]\mbf{\alpha}(\kvec) &= \cos(\theta) \proj(\khat \cdot \mbf{S})\mbf{\alpha}(\kvec) \\
        &= (\vhat \cdot \khat)(\khat \cdot \mbf{S})\bm\alpha(\kvec) \\
        &= i(\vhat \cdot \khat)\frac{d}{d\psi}\Big |_ {\psi = 0}R_{\khat}^{\psi} \mbf{\alpha}(\kvec) \\
        &= i(\vhat \cdot \khat)\frac{d}{d\psi}\Big |_ {\psi = 0}R_{\khat}^{\psi} \mbf{\alpha}(R_{\khat}^{-\psi}\kvec) \\
        &= (\vhat \cdot \khat)(\khat \cdot \mbf{J})\mbf{\alpha}(\kvec) \\
        &= (\vhat \cdot \mbf{J}_\parallel)\mbf{\alpha}(\kvec),
    \end{align}
    proving that $\mbf{J}_\parallel = (\proj \circ \mbf{S})$. It then follows from Eqs. (\ref{eq:SWSH:par_perp_decomp}) and (\ref{eq:SWSH:S_L_decomp}) that $\mbf{J}_\perp = (\proj \circ \mbf{L})$, proving the lemma for $h = \pm1$. The extension to arbitrary $h$ follows from Eq. (\ref{eq:SWSH:product_rule_extension}).
\end{proof}
$\mbf{J}_\parallel = \proj \circ \mbf{S}$ and $\mbf{J}_\perp = \proj \circ \mbf{L}$ are well-defined operators on the Hilbert space of sections $L^2(\gamma_h)$, however, the same is not true of $\mbf{S}$ and $\mbf{L}$.  Indeed, $(\mbf{S}\mbf{\alpha})(\kvec)$ and $(\mbf{L}\mbf{\alpha})(\kvec)$ are not generally orthogonal to $\kvec$ and thus are not sections of $\gamma_{h}$.
While $\mbf{S}$ and $\mbf{L}$ do each satisfy $\so(3)$ commutation relations, 
\begin{align}
    [S_a,S_b] = i\epsilon_{abc}S_c, \\
    [L_a,L_b] = i\epsilon_{abc}L_b,
\end{align}
composition with the projection operator destroys this rotational symmetry. The projected operators instead satisfy nonstandard commutation relations 
\begin{align}
    [J_{\parallel,a} , J_{\parallel,b}] &= 0, \\
    [J_{\perp,a},J_{\parallel,b}] &= i\epsilon_{abc}J_{\parallel,c}, \\
    [J_{\perp,a},J_{\perp,b}] &= i\epsilon_{abc}(J_{\perp,c} - J_{\parallel,c}),
\end{align}
with $\mbf{J}_\parallel$ generating a 3D translational symmetry while the vector operator $\mbf{J}_\perp$ is not the generator of any symmetry group at all (since it is not closed under commutations).

\subsection{Solving for angular momentum eigenstates}
We now decompose the angular Hilbert space $L^2(\gamma_{h,S^2})$ into UIRs of $\SO(3)$, which is equivalent to solving for the angular momentum multiplets of sections $(_h\mbf{\alpha}_{jm})$ with $-j \leq m \leq j$ which satisfy
\begin{subequations}
\label{eq:SWSH:AM_bundle}
\begin{align}
    J_z \, _h\mbf{\alpha}_{jm} &= m \,{_h\mbf{\alpha}_{jm}}, \\
    J_\pm \, _h\mbf{\alpha}_{jm} &= [(j \mp m)(j + 1 \pm m)]^{1/2} \,  {_h\mbf{\alpha}_{jm\pm 1}}, \label{eq:SWSH:ladder}\\
    J^2 \, _h\mbf{\alpha}_{jm} &= j(j+1) \, _h\mbf{\alpha}_{jm},
\end{align}
\end{subequations}
where $J_\pm = J_x \pm iJ_y$. To proceed, we will express the $\mbf{J}$ operators in a particular choice of basis for $\gamma_{h,S^2}$ which will elucidate the relationship between the total angular momentum states and the spin-weighted spherical harmonics. If we choose the standard spherical unit vectors $(\etheta,\ephi)$ then $\epm \doteq \frac{1}{\sqrt{2}}(\etheta \pm i \ephi)$ forms a smooth basis of $\gamma_\pm$ at all $\khat \in S^2$ except the poles; similarly,
\begin{equation}
    \eh \doteq 2^{-\frac{|h|}{2}}\underbrace{(\etheta \pm i \ephi) \otimes \cdots \otimes (\etheta \pm i \ephi)}_{|h| \text{ times}}
\end{equation}
forms such a basis of $\gamma_h$, where the $\pm$ signs correspond to the sign of $h$. A globally smooth section $\mbf{\alpha}$ of $\gamma_{h,S^2}$ is expressed as
\begin{equation}
    \mbf{\alpha}(\theta, \phi) = f(\theta, \phi)\eh.
\end{equation}
$\mbf{\alpha}$  and $\eh$ are smooth for $\theta \neq 0,\pi$, so the same is true of $f(\theta,\phi)$. Since $\mbf{\alpha}$ is also smooth at the poles, the limits
\begin{subequations}
\begin{align}
    \lim_{\theta \rightarrow 0}f(\theta,\phi)\eh(\theta,\phi) &= \lim_{\theta \rightarrow 0}f(\theta,\phi)e^{-ih \phi}(\ex \pm i \ey)^{|h|} \\
    \lim_{\theta \rightarrow \pi}f(\theta,\phi)\eh(\theta,\phi) &= \lim_{\theta \rightarrow \pi}f(\theta,\phi)e^{ih \phi}(\ex \pm i \ey)^{|h|}
\end{align}
\end{subequations}
must exist and agree for all $\phi$. This implies that asymptotically
\begin{subequations}
\label{eq:SWSH:limiting_behavior}
\begin{align}
    f(\theta, \phi) &\sim c_N e^{ih\phi},\;\; (\theta \rightarrow 0^+)\\
    f(\theta, \phi) &\sim c_S e^{-ih\phi},\;\; (\theta \rightarrow \pi^-)
\end{align}
\end{subequations}
for constants $c_N$ and $c_S$. If $c_N$ or $c_S$ are nonzero, then $f(\theta,\phi)$ is not smooth at $\theta = 0$ or $\theta = \pi$, respectively. However, we see that such cases correspond only to coordinate singularities, and not actual singularities as the state $\boldsymbol{\alpha}$ is still globally smooth.

To solve for the angular momentum multiplets, we express $J_z$,   $J_\pm$, and $J^2$ in the $\eh$ frame so that
\begin{subequations}
\label{eq:SWSH:multiplet_bundle}
\begin{align}
J_z (f \eh) &= J'_z (f) \eh, \label{eq:SWSH:J_z_alpha}\\
J_\pm (f \eh) &= J'_\pm (f) \eh, \label{eq:SWSH:J_pm_alpha}\\
J^2(f \eh) &= J'^2 (f) \eh \label{eq:SWSH:J^2_alpha},
\end{align}
\end{subequations}
for operators $J'_z$, $J'_\pm$, $J'^2$ which will be determined. By the definition of $\mbf{J}$,
\begin{equation}
    \mbf{J}(f\eh) = \mbf{L}(f)\eh + f \mbf{J}(\eh).
\end{equation}
Since $\eh$ is rotationally invariant about the $z$-axis, it follows from Eq. (\ref{eq:SWSH:J_def}) that
\begin{equation}
    J'_z = -i \partial_\phi.
\end{equation}
To find $J'_\pm$ and $J'^2$ we will use the decomposition in Eq. (\ref{eq:SWSH:prime_decomp}). We have
\begin{align}
    \mbf{J}_\perp &= -i \Big( \ephi \proj \circ \partial_\theta - \frac{\etheta}{\sin{\theta}} \proj \circ \partial_{\phi} \Big) \label{eq:SWSH:J_perp}
\end{align}
where $\proj$ annihilates any $\khat$ components produced by the angular derivatives. Direct calculation gives
\begin{align}
    J_{\perp,x}\eh &= h \frac{\cos^2 \theta}{\sin{\theta}} \cos \phi \, \eh \label{eq:SWSH:J_perp_x}\\
    J_{\perp,y}\eh &= h \frac{\cos^2 \theta}{\sin{\theta}} \sin \phi \, \eh \label{eq:SWSH:J_perp_y}\\
    J_{\perp,z}\eh &= -h \cos \theta \, \eh  \label{eq:SWSH:J_perp_z}\\
    (\mbf{v} \cdot \mbf{J}_{\parallel})\eh &= (\mbf{v} \cdot \khat)h \eh. \label{eq:SWSH:J_par}
\end{align}
Defining $J_{\perp,\pm} = J_{\perp,x} + iJ_{\perp,y}$ and similarly for $J_{\parallel,\pm}$ and $L_\pm$, we have
\begin{align}
    J_\pm (f \eh) &= \big(L_\pm(f) + f J_{\parallel,\pm}+ f J_{\perp,\pm}\big)\eh \\
    &= e^{\pm i\phi}\Big[\Big(\pm i \partial_\theta + i \frac{\cos \theta}{\sin \theta} \, \partial_\phi + \frac{h}{\sin \theta}\Big)f\Big ]\epm,
\end{align}
so that
\begin{equation}
    J_\pm' = e^{\pm i \phi}\Big(\pm \partial_\theta + i\frac{\cos \theta}{\sin \theta} \partial_\phi + \frac{h}{\sin \theta} \Big).
\end{equation}

To calculate $J^2$ we note that by Eq. (\ref{eq:SWSH:J_perp}), $\mbf{J}_\perp \cdot \mbf{J}_\parallel = \mbf{J}_\parallel \cdot \mbf{J}_\perp = 0$ so
\begin{equation}
    J^2 = (\mbf{J}_\parallel + \mbf{J}_\perp)^2 = J_\parallel^2 + J_\perp^2
\end{equation}
and thus
\begin{equation}
    J^2 (f\eh) = h^2 f\eh + J_\perp^2(f \eh).
\end{equation} 
Application of the product rule gives
\begin{align}\label{eq:SWSH:J2_big}
    J^2(f\eh) = &h^2f\eh + (L^2 f)\eh + |h|f (J_\perp^2 \epm) \otimes \uv{e}^{(1)}_h  \\
    &+\sum_{p=1}^3 2|h|(L_p f)(J_{\perp,p} \epm) \otimes \uv{e}^{(1)}_h + (h^2 - |h|)f(J_{\perp,p} \epm \otimes J_{\perp,p} \epm \otimes \uv{e}^{(2)}_\pm)
\end{align}
where
\begin{align}
    \uv{e}^{(q)}_h \doteq \underbrace{\epm \otimes \cdots \otimes \epm}_{|h|-q \text{ times}}.
\end{align}
From Eqs. (\ref{eq:SWSH:J_perp_x})-(\ref{eq:SWSH:J_perp_z}),
\begin{gather}
    J_\perp^2 \epm = \cot^2\theta \epm \\
    \sum_{p=1}^3 (J_{\perp,p} \epm \otimes J_{\perp,p} \epm \otimes \uv{e}^{(2)}_\pm) = \cot^2\theta \eh \\
    \sum_{p=1}^3 |h|(L_p f)(J_{\perp,p} \epm) \otimes \uv{e}^{(1)}_h = -h \frac{\cos \theta}{\sin^2 \theta}(L_z f) \eh.
\end{gather}
Substituting these into Eq. (\ref{eq:SWSH:J2_big}) gives
\begin{equation}
    J^2(f\eh) = \Big[\nabla^2 f + \frac{h^2}{\sin^2 \theta} f -2h \frac{\cos \theta}{\sin ^2 \theta} L_z f\Big]\eh.
\end{equation}
Thus we find that the helicity $h$ angular momentum operators in the $\eh$ basis are given by
\begin{align}
    J_z' &= -i \partial_\phi, \\
    J_\pm' &= e^{\pm i \phi}\Big(\pm \partial_\theta + i\frac{\cos \theta}{\sin \theta} \partial_\phi + \frac{h}{\sin \theta} \Big), \\
    J'^2 &= -\nabla^2 - \frac{2 h\cos \theta}{\sin^2 \theta}L_z + \frac{h^2}{\sin ^2 \theta},
\end{align}
so that Eq. (\ref{eq:SWSH:AM_bundle}) reduces to solving
\begin{subequations}
\begin{align}
    J_z' \, _h{f}_{jm} &= m\, _h{f}_{jm}, \\
    J_\pm' \, _h{f}_{lm} &= [(j \mp m)(j + 1 \pm m)]^{1/2} \,  _h{f}_{jm\pm 1}, \\
    J'^2 \, _h{f}_{jm} &= j(j+1) \, _h{f}_{lm}.
\end{align}
\end{subequations}
These equations are solved by the SWSHs of spin-weight $-h$ \cite{Dray1985}
\begin{equation}
    _{h}{f}_{jm}(\theta,\phi) = {_{-h}{Y}}_{jm}(\theta,\phi),
\end{equation}
so that the simultaneous eigenstates of helicity, $J^2$, $J_z$, and energy are given by
\begin{equation}
    _h \mbf{\alpha}_{jm,\mbf{k}_0} = {_{-h}{Y}_{jm}}(\theta,\phi) \delta(|\mbf{k}|-|\mbf{k}_0|)\mbf{E}_h.
\end{equation}
The SWSHs are given explicitly by \cite{Goldberg1967}
\begin{align}
    &_h{Y}_{jm}(\theta,\phi) = \Big[ \frac{(j + m)! (j-m)! (2j+1)}{4\pi (j+h)!(j-h)!}\Big]^{\frac{1}{2}} (\sin \theta / 2)^{2j} \\ 
    &\times \sum_{q}\binom{j-h}{q}\binom{j+h}{q+h-m}(-1)^{j-q-h-m}e^{im\phi}(\cot \theta / 2)^{2q+h-m}
\end{align}
where $j \geq |h|$, $-j \leq m \leq j$ and the sum ranges from $q = \max (0,m-h)$ to $q = \min (j-h,j+m)$. Interestingly, Dray \cite{Dray1985} discovered this in the inverse context, reverse engineering angular momentum operators $J'_z, J'_\pm,J'^2$ for the SWSHs. Here, we instead naturally encounter these angular momentum operators in the study of massless particles, demonstrating one of their physical origins. An arbitrary section $\alpha(\kvec)$ of $\gamma_h$ can be expressed by
\begin{equation}\label{eq:SWSH:SWSH_basis}
    \mbf{\alpha}(\kvec) = \sum_{j= |h|}^{\infty} \sum_{m=-j}^j a_{jm}(|\kvec|)\,{_{-h}Y_{jm}(\theta,\phi)}.
\end{equation}
Technically, this follows from the equivalence of sections of $\gamma_h$ with spin-weight $-h$ functions established in Sec. \ref{sec:SWSH:Spin_weighted_functions} and the completeness of the SWSHs with respect to the spin-weighted functions which was established by Newman and Penrose \cite{Newman1966}. The coefficients $a_{jm}$ can be calculated from the orthogonality of the SWSHs \cite{Goldberg1967}:
\begin{equation}
    \int_0^{2\pi} d\phi\int_{-1}^1 d(\cos(\theta)) \, {_hY}_{jm}(\theta,\phi)\, {_{h'}Y_{j'm'}} = \delta_{jj'}\delta_{mm'}.
\end{equation}

We also see how the SWSHs $_{-h}{Y}_{jm}$ are better understood as sections of the bundle $\gamma_h$ rather than as scalar functions. For example, the $_{-h}{Y}_{jm}(\theta,\phi)$ have seemingly anomalous singularities at $_{-h}{Y}_{jh}(0,\phi)$ and $_{-h}{Y}_{j,-h}(\pi,\phi)$. However, a closer examination shows that 
\begin{align}
   _{-h}{Y}_{jh}(\theta,\phi) &\sim (-1)^h \sqrt{\frac{2j+1}{4\pi}} e^{ih\phi},\;\; (\theta \rightarrow 0^+)\\
    _{-h}{Y}_{j,-h}(\theta,\phi) &\sim (-1)^{j}\sqrt{\frac{2j+1}{4\pi}} e^{-ih\phi},\;\; (\theta \rightarrow \pi^-)
\end{align}
so that by  Eq. (\ref{eq:SWSH:limiting_behavior}), the sections ${_{-h}Y_{jm}}\mbf{E}_h$ are indeed smooth at the poles; the apparent singularities of the SWSHs vanish when they are properly considered as sections rather than functions. We note that the relationship between SWSHs and sections of vector bundles was rigorously studied in the mathematics literature by Eastwood and Tod \cite{Eastwood1982}, who showed that the SWSHs can be considered as sections of line bundles over the complex projective line $\mathbb{P}^1(\Comp)$. This is essentially the same structure we encounter since $\mathbb{P}^1(\Comp)$ is homeomorphic to the Riemann sphere $S^2$, and thus the line bundles over $\mathbb{P}^1(\Comp)$ are in bijective correspondence with the $\gamma_h$. Our findings show a physical interpretation of the sectional nature of the SWSHs, showing that they are angular momentum eigenstates of massless particles. 

Eq. (\ref{eq:SWSH:SWSH_basis}) shows that the SWSHs give a countable basis for monochromatic massless waves. The vector bundle description also furnishes a basis in terms of the momentum eigenstates,  $(\khat,\uv{e}_h(\khat))$, however, this basis has certain disadvantages. First, the basis vectors are labeled by $\khat \in S^2$ and thus uncountable in number, making this a much larger basis than that given by the SWSHs. This corresponds to the spectrum of $\mbf{k}$ being continuous whereas that of $\mbf{J}^2$  and $J_z$ is discrete. Second, the momentum basis has discontinuities at the poles due to the singularities in $\uv{e}_h$; such singularities cannot be avoided by replacing $\uv{e}_h$ with some other choice of unit vectors due to the topological nontriviality of $\gamma_h$ (except when $h\neq0$). In contrast, the basis given by ${_{-h}Y_{jm}}\uv{e}_h$ is countable and composed of globally smooth states. 

In the case of photons, this results in an expansion of polarization states in terms of the angular momentum eigenstates $_{\mp1}Y_{jm}\mbf{e}_{\pm1}$. It is notable that this expansion is quite different form that used in Refs. \cite{Zaldarriaga1997,ng1999,Wiaux2006} for analyzing the CMB in terms of SWSHs. Indeed, these studies, as well as a related application in computer graphics \cite{Yi2024}, actually use the spin-weight $\pm2$ harmonics $_{\pm2}Y_{jm}$. This is due to the fact that they are analyzing Stokes vectors, which encode time averaged polarization intensities, and thus do not have any phase information. Furthermore, they are only analyzing linear polarizations since the CMB does not contain circularly polarized light.\footnote{Unpolarized light cannot become circularly polarized via Thomson scattering and is thus absent from the CMB \cite{Zaldarriaga1997}.} It turns out that linear polarizations without phase information can essentially be described by helicity $\pm2$ representations and can thus be expressed in terms of the spin-weight $\pm2$ spherical harmonics \cite{Zaldarriaga1997}. In contrast, we are analyzing the eigenstates of the full polarization space, allowing for circular polarizations and retaining phase information. In this context, photons are described by the spin-weight $\pm1$ harmonics.

\section{The nonexistence of spin and orbital angular momentum operators of massless particles}\label{sec:SWSH:SAM_OAM}
There has been a long controversy \cite{Akhiezer1965,Jaffe1990,VanEnk1994_EPL_1,VanEnk1994_JMO_2,Chen2008,Wakamatsu2010,Bliokh2010,Bialynicki-Birula2011,Leader2013,Leader2014,Leader2016,Leader2019,Yang2022, PalmerducaQin_PT, Das2024, PalmerducaQin_GT, PalmerducaQin_SAMOAM} surrounding the question of whether or not massless particles admit an SAM-OAM decomposition. We recently proved \cite{PalmerducaQin_SAMOAM}
that it is impossible to decompose the total angular momentum operator of massless particles into SAM and OAM operators, generalizing the findings of van Enk and Nienhuis who first found that a particular attempted SAM-OAM decomposition did not result in legitimate angular momentum operators \cite{VanEnk1994_EPL_1,VanEnk1994_JMO_2}. This result holds even if one significantly relaxes axiomatic constraints on the SAM and OAM operators, as shown in the no-go theorems in Ref. \cite{PalmerducaQin_SAMOAM}. However, the proofs of these theorems are somewhat abstract. Here, we can use the properties of the total angular momentum eigenstates to give a concrete illustration of the no-go results, showing that the structure of the total angular momentum multiplets given by the SWSHs rules out any reasonable spin-orbital decomposition of $\mbf{J}$.

The (nonprojective) irreducible representations of $\SO(3)$ are the complex vector spaces $V_a$ of dimension $2a+1$ where $a$ is a nonnegative integer. The tensor product of two such representations with $a\geq b$ is given by the Clebsch--Gordan addition of angular momentum:
\begin{equation}\label{eq:SWSH:AM_addition}
    V_a \otimes V_b \cong V_{a-b} \oplus V_{a-b + 1} \oplus \cdots \oplus V_{a+b}.
\end{equation}
We have that
\begin{align}
    L^2(\gamma_h) &= L^2(\mathbb{R}^+)\otimes \big(V_{|h|} \oplus V_{|h|+1} \oplus V_{|h|+2} \oplus \cdots\big) \label{eq:SWSH:massless_multiplet}\\
    &= L^2(\mathbb{R}^+) \otimes \Big(\bigoplus_{n = 0}^{\infty}V_{|h|+n}\Big)
\end{align}
where the copy of $V_{|h|+n}$ is spanned by ${_{-h}Y_{|h|+n,m}}\,\mbf{e}_h$ for $-(|h|+n)\leq m\leq|h|+n$ and $L^2(\Real^+)$ describes the radial dependence, \emph{i.e.}, the energy of the states (which is unimportant for the analysis of angular momentum). There is precisely one total angular momentum $j$ multiplet if $j\geq |h|$ and zero multiplets for $j<|h|$. This multiplet structure is distinctly different from what one typical obtains for a particle whose total angular momentum splits into well-defined spin and orbital parts as $\mbf{J} = \mbf{L} + \mbf{S}$. Consider, for example, a massive spin $s$ particle. In nonrelativistic quantum mechanics, the Hilbert space is given by
\begin{align}\label{eq:SWSH:massive_hilbert_space}
    \mathcal{H} = L^2(\Real^3) \otimes \Comp^{2s+1} = \underbrace{L^2(\Real^+)}_{\mathcal{H}_r}\otimes \underbrace{L^2(S^2)}_{\mathcal{H}_o} \otimes \underbrace{\Comp^{2s+1}}_{\mathcal{H}_s}.
\end{align}
$\mathcal{H}$, $L^2(S^2)$ and $\Comp^{2s+1}$ are each representations $\SO(3)$ whose generators are the total ($\mbf{J}$), orbital ($\mbf{L}$), and spin ($\mbf{S}$) angular momentum operators and are related by $\mbf{J} = \mbf{L} + \mbf{S}$. $\mathcal{H}_r$ is a trivial representation of $\SO(3)$ and has generator $0$. We write 
\begin{equation}
    \mathcal{H}_{J} = \mathcal{H}_o \otimes \mathcal{H}_s
\end{equation}
for the part of the Hilbert space containing nontrivial rotational symmetry. $\mathcal{H}_s = \Comp^{2s+1}$ is the irreducible representation $V_{s}$ while $L^2(S^2)$, the space of square-integrable functions on the sphere, decomposes as
\begin{equation}
    L^2(S^2) \cong V_0 \oplus V_1 \oplus V_2 \oplus \cdots
\end{equation}
where $V_l$ corresponds to the subspace spanned by the ordinary spherical harmonics $Y_{lm}$ with $|m| \leq l$. By the addition of angular momentum,  for massive particles with integer spin we have
\begin{align}
    \mathcal{H}_{J} &\cong V_s \otimes (V_0 \oplus V_1 \oplus V_2 \oplus \cdots)  
    \label{eq:SWSH:addition_of_AM_1} \\
    &\cong \Big(\bigoplus_{m=0}^{s-1}(2m+1)V_m \Big)\oplus \Big(\bigoplus_{n={s}}^{\infty}(2s+1)V_n\Big). \label{eq:SWSH:massive_multiplet}
\end{align}
The last line follows since for $0\leq m < s$, one copy of $V_{m}$ will result from $V_s \otimes V_{s+q}$ precisely when $-m\leq q\leq m$, and thus there are $2m+1$ copies of $V_m$ in $\mathcal{H}$. Meanwhile, for $n\geq s$, one copy of $V_n$  results from $V_s \otimes V_{n+q}$ precisely when $-s \leq q \leq s$, giving $2s+1$ copies of $V_n$ in $\mathcal{H}$.

Comparing Eqs. (\ref{eq:SWSH:massless_multiplet}) and (\ref{eq:SWSH:massive_multiplet}), we see that massive particles (which possess spin) and massless particles (which possess helicity) have very different angular momentum structures. For example, a massive spin $1$ particle has the multiplet structure
\begin{equation}
    \mathcal{H}_{J}^{s=1} \cong V_0 \oplus 3V_1 \oplus 3V_2 \oplus 3V_3 \oplus \dots
\end{equation}
while a massless particle with helicity $1$ has
\begin{equation}
    \mathcal{H}_{J}^{h=1} \cong V_1 \oplus V_2 \oplus V_3 \oplus  \cdots .
\end{equation}
Notably, the helicity of a massless particle enforces a lower bound on the total angular momentum which does not exist for massive particles. This is reflected in the properties of the SWSHs ${_{-h}Y_{jm}}$, which only exist if $j\geq |h|$. Massless particles also only have at most a single angular momentum multiplet of each total angular momentum $j$, while massive particles have up to $2s+1$ degenerate multiplets for each $j$. This is related to the fact that for massive particles, the spherical harmonics describe the orbital angular momentum which are then tensor multiplied by the internal spin space $\Comp^{2s+1}$, producing additional copies of the multiplets. In contrast, the SWSHs ${_{-h}Y_{jm}}$ describe the \emph{total} angular momentum states for massless particles.

From examining the addition of angular momentum formula in Eq. (\ref{eq:SWSH:AM_addition}), it becomes apparent that the multiplet structure of massless particles [Eq. (\ref{eq:SWSH:massless_multiplet})] would not result from any reasonable spin-orbital splitting of the angular momentum. In particular, it is difficult to avoid generating repeated multiplets and or low angular momentum multiplets. Thus, from examining the difference between massive and massless angular momentum states, and the differences between ordinary and spin-weighted spherical harmonics, we come to the conclusion that massless particles should not admit an SAM-OAM decomposition.

We note that it is technically possible to contrive spin and orbital multiplet structures which would tensor product to give Eq. (\ref{eq:SWSH:AM_addition}), but they are highly unnatural. For example, the simplest such splitting for a helicity $1$ massless particle is
\begin{equation}\label{eq:SWSH:unnaatural splitting}
    V_1 \otimes(V_0\oplus V_3 \oplus V_6 \oplus \cdots) \cong V_1 \oplus V_2 \oplus V_3 \oplus \cdots
\end{equation}
in which the candidate orbital angular momentum $l$ could only take on integer multiples of $3$, a strange hypothetical which has never been suggested. We note that one can rigorously prove that the rotationally symmetry of the massless Hilbert spaces $L^2(\gamma_{h,S^2})$ cannot be described by a tensor product of two nontrivial $\SO(3)$ representations (\cite{PalmerducaQin_SAMOAM}, No-Go Theorem 1), so these exotic splittings are not realized in massless particles.

We note that the massless SAM-OAM obstruction can also be understood by examining the complete set of commuting observables (CSCO). As we saw in the previous section, $(H,\chi,J^2, J_z)$ form a CSCO for massless particles, that is, a basis of states can be labeled by $|\kvec_0|,h,j,j_z$. Compare this with massive particles, for which we know the angular momentum splits into well-defined SAM and OAM parts:
\begin{equation}\label{eq:SWSH:massive_splitting}
    \mbf{J} = \mbf{S} + \mbf{L}.
\end{equation}
Since $\mbf{S}$ and $\mbf{L}$ are $\SO(3)$ generators, it follows from the representation theory of $\SO(3)$ that $S^2$ and $L^2$ commute with each other and with $J^2$, and $J_z$. It is also true that $\mbf{S}$ and $\mbf{L}$ commute with $H$, and therefore so do $S^2$ and $L^2$ \cite{Terno2003}. Indeed, a CSCO for massive particles with spin, assuming no additional internal structure (e.g., no color charge), is given by $(H, J^2, J_z,S^2,L^2)$, and a basis is labeled by $(|\kvec_0|, j,j_z,s,l)$ \cite{Shankar_QM}. $s$ is determined by the identity of the particles, as is $h$ in the massless case. We see though a discrepancy between massive and massless particles: one must specify both the orbital and total angular momentum eigenvalues $l$ and $j$ in the massive case, whereas in the massless case only $j$ is needed. 

Suppose for a massless particle we assume that Eq. (\ref{eq:SWSH:massive_splitting}) is true for some angular momentum operators $\mbf{S}$ and $\mbf{L}$ which commute with $H$. It then it follows that $L^2$ commutes with $(H,\chi,J^2,J_z)$. However, the latter is already a complete set of observables, so it must be that $L^2$ gives redundant information such that the value of $l$ can be inferred from $|\kvec_0|,h,j,j_z$. This corresponds precisely to the unnatural splittings of the form Eq. (\ref{eq:SWSH:unnaatural splitting}), since each $V_j$ on the rhs (labeled by j) is produced by a single tensor product $V_1 \otimes V_l$ on the lhs. As previously mentioned, these unnatural splittings can be rigorously ruled out by the no-go theorems of Ref. \cite{PalmerducaQin_SAMOAM}.

\section{Conclusion}
Massive and massless particles have very different Hilbert spaces. Massive particles are topologically trivial, allowing the internal and external DOFs to be separated. The external Hilbert space decomposes into rotationally invariant subspaces spanned by the ordinary spherical harmonics. Massless particles, on the other hand, are topologically nontrivial (for $h \neq 0$). Even though they only have a single internal DOF, this DOF is twisted together with the external momentum DOFs. The result is that states are described not by ordinary functions, but by sections of topologically nontrivial line bundles (or equivalently by spin-weighted functions). The rotationally invariant subspaces are spanned by the total angular momentum multiplets, which in this case are SWSHs $({_{-h}Y_{jm}})$ rather than ordinary spherical harmonics $(Y_{jm})$. This harmonic decomposition gives a smooth countable basis for monochromatic massless waves, something which is not provided by the momentum eigenstates. The angular momentum multiplet structure of the SWSHs is also quite different than that which occurs for massive particles. Indeed, in the massless case we obtain a very sparse multiplet structure, with a single multiplet for each $j\geq |h|$ and no multiplet if $j < |h|$. In contrast, there is no positive lower bound on the angular momentum of massive particles with integer spin and typical multiplets are degenerate (except in the case of spin $0$ particles). From taking into account the way angular momenta add, one sees that the massless multiplet structure cannot result from any reasonable SAM-OAM decomposition of the angular momentum $\mbf{J}$. Indeed, it was recently established via more abstract geometric and topological arguments that such a massless SAM-OAM decomposition is impossible \cite{PalmerducaQin_SAMOAM}; here we see that this conclusion is supported by carrying out the explicit angular momentum decomposition of massless particles.

\acknowledgments

This work is supported by U.S. Department of Energy (DE-AC02-09CH11466).


\bibliographystyle{JHEP}
\bibliography{biblio.bib}


\end{document}